\title{ Analysis of Dynamic Voronoi Diagrams in the Hilbert Metric}
\author{Madeline Bumpus\thanks{Howard Universty, Washington DC, USA, {\tt gmadeline. bumpus@bison.howard.edu}}
        \and
        Xufeng Caesar Dai\thanks{Haverford College, Haverford, Pennsylvania, USA, {\tt  xdai1@haverford.edu}}
        \and
        Auguste H. Gezalyan\thanks{Department of Computer Science, University of Maryland, College Park, USA {\tt  octavo@umd.edu}}
                \and
        Sam Muñoz\thanks{Colby College, Waterville, Maine, USA, {\tt  smunoz23@colby.edu}}
                        \and
        Renita Santhoshkumar\thanks{Montgomery Blair High School, Silver Spring, Maryland, USA, {\tt  renitadsanthoshkumar@gmail.com}}
                \and
        Songyu Ye\thanks{Cornell Unversty, Ithaca, NY, USA, {\tt  sy459@cornell.edu}}
        \and
       David M. Mount\thanks{Department of Computer Science, University of Maryland, College Park, USA, {\tt  mount@umd.edu}}}
\newcommand{\ang}[1]{\langle #1\rangle}
\renewcommand{\ang}[1]{\langle #1\rangle}
\newcommand{\RE}{\mathbb{R}}            % real space
\newcommand{\eps}{\varepsilon}          % my preferred epsilon
\newcommand{\ST}{\,:\,}                 % { x \ST y }
\newcommand{\SP}{\kern+1pt}             % generates a small space
\DeclareMathOperator{\interior}{int}
\DeclareMathOperator{\Vor}{Vor}
\newtheorem{definition}{Definition}
\newtheorem{observation}{Observation}
\begin{document}
\thispagestyle{empty}
\maketitle

%-----------------------------------------------------------------------
\begin{abstract}
%The Hilbert metric is a projective metric used to measures distances within a convex body. It generalizes the Cayley-Klein model of hyperbolic geometry to any convex set. 

The objective of this paper is to study the properties of Hilbert bisectors and analyze Hilbert Voronoi diagrams in the dynamic setting. Additionally, we introduce dynamic visualization software for Voronoi diagrams in the Hilbert metric on user-specified convex polygons.
\end{abstract}
%-----------------------------------------------------------------------

%-----------------------------------------------------------------------
\section{Introduction}
%-----------------------------------------------------------------------

In 1895, David Hilbert introduced the Hilbert metric~\cite{hilbert1895linie},
a projective metric used to measures distances within a convex body. It generalizes the Cayley-Klein model of hyperbolic geometry (on Euclidean balls) to any convex body. Hilbert geometry provides new insights into classical questions from convexity theory and the study of metric and differential geometries (such as Finsler geometries). 

In particular, Hilbert Geometry is of interest in the field of convex approximation. To approximate convex bodies, many techniques \cite{AFM17c,AAFM19,AFM17b,EHN11,RoV21,EiV21,NaV19,AAFM20} make use of covering convex bodies with regions that behave like metric balls in the Hilbert metric. These regions go by various names, such as: Macbeath regions, Macbeath ellipsoids, Dikin ellipsoids, and $(2,\eps)$-covers. In this way, a deeper understanding of the Hilbert metric can lead us to a deeper understanding of convex approximation. 

The Hilbert metric is applicable to many other fields of study such as quantum information theory \cite{reeb2011hilbertquantum}, real analysis \cite{lemmens2013birkhoff},  optimization \cite{chen2016entropic}, and network analysis \cite{chen2019relaxed}. For example, the Hilbert metric is crucial in nonlinear Frobenius-Perron theory \cite{lemmens2012nonlinear, chen2019relaxed}. In addition, the Hilbert metric has often been used in the context of optimal mass transport and Schr\"{o}dinger bridges \cite{chen2015optimal,chen2016entropic,chen2019relaxed, peyre2019computational}.  

There has been little work on the design and analysis of algorithms in the Hilbert geometry on convex polygons and polytopes, and most of it has been done in the last two decades. In this paper, we build off of Nielson and Shao~\cite{nielsen2017balls} as well as Gezalyan and Mount~\cite{gezalyan2021voronoi} to implement an algorithm for creating Voronoi diagrams in the Hilbert metric, characterize their bisectors as conics with an explicit formula, and analyze them in the dynamic setting. 

%-----------------------------------------------------------------------
\section{Preliminaries}
%-----------------------------------------------------------------------

\subsection{Previous Work} \label{sec:hilbert previous}

In 2017, Nielsen and Shao~\cite{nielsen2017balls} presented a characterization of the shape of balls in the Hilbert metric and explored their properties. They showed that the shape of a ball depends on the location its center as well as the vertices of the Hilbert domain. Nielsen and Shao gave an explicit description of Hilbert balls and studied the intersection of two Hilbert balls. In 2021, Gezalyan and Mount \cite{gezalyan2021voronoi} expanded on this by studying the geometric and combinatorial properties of Voronoi diagrams in the Hilbert metric. They presented two algorithms, a randomized incremental and a divide and conquer algorithm, for constructing these diagrams. We extend these works to analyze other properties of Voronoi diagrams in the Hilbert geometry, and we develop a software package implementing an incremental algorithm based on the randomized incremental algorithm. 
%-----------------------------------------------------------------------
\subsection{Defining the Hilbert Metric} \label{sec:hilbert definitions}
%-----------------------------------------------------------------------

A \emph{convex body} $\Omega \subset \mathbb{R}^d$ is a closed, compact, full-dimensional convex set. Given two fixed points $s, t \in \mathbb{R}^d$, let $\|s - t\|$ denote the Euclidean distance and $H(s,t)$ denote the Hilbert distance between them. Let $\chi(s,t)$ denote the \emph{chord} defined as the intersection of the line passing through $s$ and $t$ with $\Omega$.

%-----------------------------------------------------------------------

\begin{definition} [Cross Ratio] \label{def:cross-ratio}
Given four distinct col\-linear points $a,b,c,d$ in $\mathbb{R}^d$, their cross ratio is defined to be:
\begin{align*} 
    (a,b;c,d)
        ~ = ~ \frac{\|a-c\|\|b-d\|}{\|b-c\|\|a-d\|}.
\end{align*}
Where the orientation of the line determines the sign of each distance.
\end{definition}
Note that the cross ratio is invariant under projective transformations \cite{papadopoulos2014funkarxiv}. 

%-----------------------------------------------------------------------

\begin{definition}[Hilbert metric] \label{def:hilbert-metric}
Given a convex body $\Omega$ in $\mathbb{R}^d$ and two distinct points $s, t \in \interior(\Omega)$, let $x$ and $y$ denote endpoints of the chord $\chi(s,t)$, so that the points are in the order $\ang{x, s, t, y}$. The Hilbert distance between $s$ and $t$ is defined as:

%-----------------------------------------------------------------------

\begin{align*}
    H_{\Omega}(s,t)
        ~ = ~ \begin{cases}
            \frac{1}{2}\ln (s,t;y,x) & \text{\quad if $s\neq t$} \\
            0 & \text{\quad if $s=t$}.
        \end{cases} 
\end{align*}
    
\end{definition}
%-----------------------------------------------------------------------
 It is well known that this is a metric, and in particular, it is symmetric and satisfies the triangle inequality. Moreover, it is well known that $ H(a,b) + H(b,c) = H(a,c)$ if and only if $r_\Omega(a,b),r_\Omega(b,c),r_\Omega(a,c)$ are collinear and $r_\Omega(b,a),r_\Omega(c,b),r_\Omega(c,a)$ are collinear, where $r_\Omega(a,b)$ is the point where the ray $ab$ meets $\partial(\Omega)$ \cite[Theorem 12.4]{papadopoulos2014funkarxiv}. This follows from the Hilbert metric's characterization as being the average of the forward and backwards Funk metrics.
It is also well known that geodesics are unique if and only if $\Omega$ is strictly convex \cite[Corollary 12.7]{papadopoulos2014funkarxiv}. 

\subsection{Voronoi Diagrams in the Hilbert Metric} \label{sec:hilbert geodesics}
Let $\Omega\subset \mathbb{R}^2$ be an open convex region whose boundary, denoted $\partial \Omega$, is a polygon. Let $S$ denote a set of $n$ sites in $\Omega$. The Voronoi cell of a site $s\in S$ is:
\begin{align*}
    V(s)
        ~ = ~  \big\{ q \in \Omega \ST d(q,s) \leq d(q,t), \,\forall t \in S \setminus \{s\} \big\}.
\end{align*}
 The \emph{Voronoi diagram} of $S$ in the Hilbert metric induced by $\Omega$, denoted $\Vor_\Omega(S)$, is the cell complex of $\Omega$ induced by the Voronoi cells $V(s)$, for all $s \in S$. Outside of the dynamic context, we assume that all sites are in general position, and in particular that no three sites are co-linear. It is known that Voronoi cells in the Hilbert Metric are stars and with complexity $\Omega(mn)$\cite{gezalyan2021voronoi}. 

It will be helpful to define two terms, \emph{spokes} and \emph{sectors}. Given our convex body $\Omega$ and a site $s$ in $\Omega$ it can be seen that the Hilbert distance from our site $s$ to a point $p$ is dependent on the edges of $\Omega$ that intersect $\chi(s,p)$. This is characterized by taking the vertices of $\Omega$, which we denote by $V$, and partitioning $\Omega$ into cells determined by the chords $\chi(v,s)$ for all $v \in V$. We define a \emph{spoke} $r_\Omega(v,s)$, $r_\Omega(s,v)$ to be the intersection of ray $vs$ and $sv$ respectively with $\Omega$ when $v$ is a vertex of $V$. Nielsen and Shao showed that Hilbert balls are $\Theta(m)$-sided polygons whose boundaries are constructed around these \emph{spokes} \cite{nielsen2017balls}.

These \emph{spokes} subdivide $\Omega$ into polygonal regions, which we will call \emph{sectors}. Each sector $T$ has the property that for all points $p\in T$, given two sites $s$ and $t$ there are four edges (not necessarily unique) of $\partial \Omega$, say $E_A,E_B,E_C,E_D$, such that $\chi(s,p)$ always intersects $\Omega$ at $E_A$ and $E_B$ and $\chi(t,p)$ at $E_C$ and $E_D$. We will write $S(s,t,E_A,E_B,E_C,E_D)$ to be the unique \emph{sector} such that for any $p \in S(s,t,E_A,E_B,E_C,E_D)$, we have $\chi(sp)$ has in order of intersection $E_A, s, p, E_B$ and $\chi(tp)$ has $E_C, t, p, E_D$ (see Figure~\ref{fig:Sector}). As a consequence of this, we can see that the bisector between two sites, $s$ and $t$, is composed of curves which depend piece-wise on the sectors they pass through.

\begin{figure}[t] 
    \centerline{\includegraphics[scale=0.35]{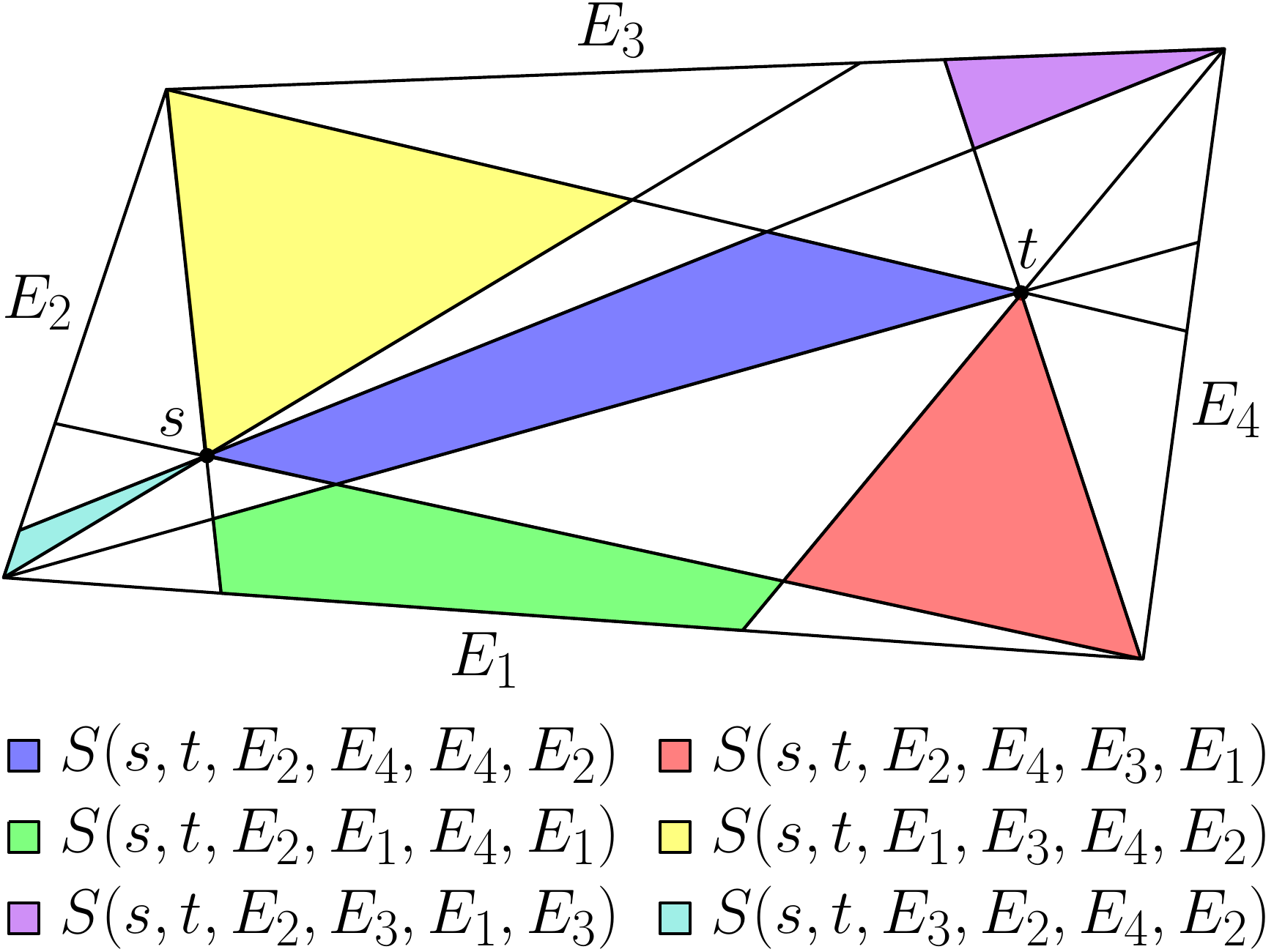}}
    \caption{Illustration of different \emph{sectors}.}\label{fig:Sector}
\end{figure}

%-----------------------------------------------------------------------

%-----------------------------------------------------------------------
\section{Hilbert Bisector analysis}

%-----------------------------------------------------------------------
\begin{theorem}\label{GeneralEquation}
The equation of the bisector between two sites in any particular sector is of the form $Ax^2+Bxy+Cy^2+Dx+Ey+F = 0$, with coefficients depending only on the line equations of the four edges and the sites. 
\end{theorem}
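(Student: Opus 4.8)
The plan is to convert the cross-ratio definition of the Hilbert distance into an explicit rational function of the coordinates of the variable point, exploiting the fact that within a fixed sector the four relevant edges of $\bd\Omega$ are determined. Let $\ell_A,\ell_B,\ell_C,\ell_D$ be affine functions on $\RE^2$ whose zero sets are the lines supporting $E_A,E_B,E_C,E_D$, normalized so that each is positive on $\Omega$; these depend only on the edge lines. Write $p=(x,y)$ for the variable point, which we take to range over the sector $S(s,t,E_A,E_B,E_C,E_D)$. First I would parametrize the chord $\chi(s,p)$ by $q(\lambda)=s+\lambda(p-s)$, so that $s=q(0)$, $p=q(1)$, and $\|q(\lambda)-q(\mu)\|=|\lambda-\mu|\,\|p-s\|$. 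Since $\ell_A$ is affine, the endpoint of $\chi(s,p)$ on $E_A$ is $q(\lambda_x)$ with $\lambda_x=\ell_A(s)/(\ell_A(s)-\ell_A(p))$, and the endpoint on $E_B$ is $q(\lambda_y)$ with $\lambda_y=\ell_B(s)/(\ell_B(s)-\ell_B(p))$. Because the sector lies in $\Omega$, every $\ell_\bullet$ keeps a constant positive sign on the sites and on $p$, and the order $\ang{E_A,s,p,E_B}$ guaranteed by the sector forces $\lambda_x<0<1<\lambda_y$; this pins down all the orientation signs implicit in the cross ratio.

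Second, I would substitute into $H_\Omega(s,p)=\frac12\ln(s,p;y,x)$. The common factor $\|p-s\|$ cancels, and the cross ratio collapses to
\begin{align*}
  (s,p;y,x)~=~\frac{\lambda_y}{\lambda_y-1}\cdot\frac{1-\lambda_x}{-\lambda_x}~=~\frac{\ell_B(s)\,\ell_A(p)}{\ell_A(s)\,\ell_B(p)},
\end{align*}
so on this sector $H_\Omega(s,p)=\frac12\ln\bigl(\ell_A(p)\ell_B(s)/(\ell_A(s)\ell_B(p))\bigr)$, and, by the identical computation for $t$ with $E_C,E_D$ in the roles of $E_A,E_B$, $H_\Omega(t,p)=\frac12\ln\bigl(\ell_C(p)\ell_D(t)/(\ell_C(t)\ell_D(p))\bigr)$. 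Here $\ell_A(s),\ell_B(s),\ell_C(t),\ell_D(t)$ are nonzero constants depending only on the edge lines and the sites, because $s,t\in\interior(\Omega)$.

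Finally, the portion of the bisector lying in this sector is the locus $H_\Omega(s,p)=H_\Omega(t,p)$, which after exponentiating and clearing denominators becomes
\begin{align*}
  \bigl(\ell_B(s)\,\ell_C(t)\bigr)\,\ell_A(p)\,\ell_D(p)~=~\bigl(\ell_A(s)\,\ell_D(t)\bigr)\,\ell_B(p)\,\ell_C(p).
\end{align*}
Each side is a product of two affine functions of $p=(x,y)$ times a constant that depends only on the edge lines and the sites, so moving everything to one side gives a polynomial of degree at most two, i.e.\ an equation $Ax^2+Bxy+Cy^2+Dx+Ey+F=0$ whose coefficients are explicit polynomial expressions in the coefficients of $\ell_A,\ell_B,\ell_C,\ell_D$ and in the four constants above. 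I expect the only real obstacle to be the bookkeeping in the first step: correctly matching each of the four edges to its role (the edge hit ``before'' a site versus the edge hit ``after'' $p$), and verifying that each $\ell_\bullet$ has constant sign over the sector so that the absolute values hidden in the cross ratio are absorbed into the constants rather than splitting the analysis into further cases. Once that is in place, everything else is routine expansion.
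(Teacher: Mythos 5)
Your proposal is correct and follows essentially the same route as the paper: both reduce each cross-ratio factor to a ratio of the affine edge-line functions evaluated at $p$, $s$, $t$ (the paper via similar triangles with point-to-line distances, you via an affine parametrization of the chord), then cross-multiply to obtain a degree-two polynomial in $p$. Your sign bookkeeping with $\lambda_x<0<1<\lambda_y$ plays the role of the paper's observation that all points lie on the same side of each edge line, so the absolute values can be dropped.
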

\begin{proof} 
Suppose we are in some arbitrary sector $S(s,t,E_A,E_B,E_C,E_D)$. Let the equation of the edges be $a_1x + a_2y + a_3 = 0$,  $b_1x + b_2y + b_3 = 0$, $c_1x + c_2y + c_3 = 0$, and $d_1x + d_2y + d_3 = 0$, for $E_A$, $E_B$, $E_C$, and $E_D$ respectively. 

Recall the Euclidean distance between a point $(x,y)$ and a line $ux+vy+l=0$ is given by $D(u,v,l,x,y) = \frac{|ux + vy + l|}{\sqrt{u^2 + v^2}}$. Let $p\in T$ be an arbitrary point with coordinates $(p_x,p_y)$. Notice that, without loss of generality, if $q$ is the point where chord $\chi(p,s)$ hits $\partial\Omega$ towards $s$, then the ratio $\frac{\|q-p\|}{\|q-s\|}$ can be replaced with the ratio $\frac{D(a_1,a_2,a_3,p_x,p_y)}{D(a_1,a_2,a_3,s_x,s_y)}$ by similar triangles. Using this technique we set $H(s,p)=H(t,p)$ and solve, yielding
\begin{align*}
    \frac{D(a_1,a_2,a_3,p_x,p_y)}{D(a_1,a_2,a_3,s_x,s_y)} \frac{D(b_1,b_2,b_3,s_x,s_y)}{D(b_1,b_2,b_3,p_x,p_y)} 
        ~ = ~ \\ \frac{D(c_1,c_2,c_3,p_x,p_y)}{D(c_1,c_2,c_3,t_x,t_y)} \frac{D(d_1,d_2,d_3,t_x,t_y)}{D(d_1,d_2,d_3,p_x,p_y)}.
\end{align*}

 The equation of the line $ux+vy+l=0$ divides the plane in two, characterized by the sign of the function $g(x,y)=ux + vy + l$. Given this, since all our points are on the same side of each line, when we substitute in for the function $D$, we can remove the absolute value bars from our equations. It follows that: 
\begin{align*}
    \frac{a_{1} p_x+a_{2} p_y+a_{3}}{a_{1} s_x+a_{2} s_y+a_{3}} \frac{b_{1} s_x+b_{2} s_y+b_{3}}{b_{1}t p_x+b_{2} p_y+b_{3}}
        ~ = ~ \\ \frac{c_{1} p_x+c_{2} p_y+c_{3}}{c_{1}t_x+c_{2}t_y+c_{3}} \frac{d_{1} t_x+d_{2} t_y+d_{3}}{d_{1} p_x+d_{2} p_y+d_{3}}.
\end{align*}
Collecting constants on one side gives us:
\begin{align*}
    \frac{b_{1}s_x+b_{2}s_y+b_{3}}{d_{1}t_x+d_{2}t_y+d_{3}}\frac{c_{1}t_x+c_{2}t_y+c_{3}}{a_{1}s_x+a_{2}s_y+a_{3}}
        ~ = ~ \\ \frac{c_{1}p_x+c_{2}p_y+c_{3}}{a_{1}p_x+a_{2}p_y+a_{3}}\frac{b_{1}p_x+b_{2}p_y+b_{3}}{d_{1}p_x+d_{2}p_y+d_{3}}.
\end{align*}
Letting the left side be a constant $k$, we find that the bisector satisfies the algebraic equation:
\begin{align*}
    Ap_x^{2} + Bp_xp_y +Cp_y^{2}+Dp_x+Ep_y+ F ~ = ~ 0,
\end{align*}
where the coefficients are: 
\begin{align*}
    A &= b_{1}c_{1}-a_{1}d_{1}k \\
    B &= b_{2}c_{1}+b_{1}c_{2}-a_{1}d_{2}k-a_{2}d_{1}k\\
    C &= b_{2}c_{2}-a_{2}d_{2}k\\
    D &= b_3c_1+c_3b_1-a_3d_1k-a_1d_3k\\
    E &= b_3c_2+b_2c_3-a_2d_3k-a_3d_2k\\
    F &= b_3c_3-a_3d_3k.
\end{align*}

By the symmetry of these equations we can see that the bisector in $S(s,t,E_A,E_B,E_C,E_D)$ is the same as the bisector in $S(s,t, E_B,E_A,E_D,E_C)$.\footnote{We would like to thank Daniel Skora for comments on a previous version of this proof.}
\end{proof}

We present several ways to simplify the bisector depending on the sector case. 

%-----------------------------------------------------------------------
\subsection{Finding Bisectors in a Sector}
%-----------------------------------------------------------------------
There are three main sector types: sectors with four distinct edges, sectors with three, and sectors with two. We begin with the four edge case. It is a well known fact from projective geometry that there exists a projective transformation that maps any convex quadrilateral to another (see Figure \ref{fig:ProjectionQuadrilaterals}). For the sake of completeness, we present the derivation of this transformation for the special case where the target is the unit square.

\begin{observation}\label{4EdgeProjection}
Bisectors in the four-edge case can be projectively transformed and solved in the unit square. 
\end{observation}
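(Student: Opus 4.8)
The plan is to construct an explicit projective transformation carrying the convex quadrilateral bounded (within $\Omega$) by the four lines $E_A, E_B, E_C, E_D$ to the unit square, and to argue that the bisector problem is preserved under this map. The key enabling fact is the one already recorded in the excerpt: the cross ratio is invariant under projective transformations (see the remark after Definition~\ref{def:cross-ratio}), and hence the Hilbert distance $H_\Omega(s,t)$ — being a logarithm of a cross ratio along a chord — transforms covariantly. Concretely, if $\varphi$ is a projective map sending $\Omega$ (or at least the relevant four supporting lines) to the square, then $H_\Omega(s,p) = H_{\varphi(\Omega)}(\varphi(s), \varphi(p))$ for every $p$ in the sector, because $\varphi$ sends the chord $\chi(s,p)$ to the chord $\chi(\varphi(s),\varphi(p))$ and preserves the four collinear points defining the cross ratio. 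Therefore $p$ is equidistant from $s$ and $t$ in $\Omega$ if and only if $\varphi(p)$ is equidistant from $\varphi(s)$ and $\varphi(t)$ in the square, so the bisector in the sector is exactly the $\varphi$-preimage of the bisector computed in the unit square.

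The construction itself proceeds in a few concrete steps. First I would identify the four vertices of the quadrilateral cut out by $E_A, E_B, E_C, E_D$ — the pairwise intersections of adjacent lines — call them $v_1, v_2, v_3, v_4$. A projective transformation of $\RE^2$ is determined by the images of four points in general position (no three collinear), so there is a unique $\varphi \in PGL_3(\RE)$ with $\varphi(v_i)$ equal to the four corners $(0,0), (1,0), (1,1), (0,1)$ of the unit square. I would exhibit $\varphi$ in homogeneous coordinates: write each $v_i$ as a column vector in $\RE^3$, find scalars $\lambda_i$ so that $\lambda_4 \tilde v_4 = \lambda_1 \tilde v_1 + \lambda_2 \tilde v_2 + \lambda_3 \tilde v_3$ (the standard normalization trick), and then the matrix $M = [\lambda_1 \tilde v_1 \mid \lambda_2 \tilde v_2 \mid \lambda_3 \tilde v_3]$ sends the projective standard frame to $v_1,v_2,v_3,v_4$; composing $M^{-1}$ with the analogous matrix for the square corners gives $\varphi$. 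Then I would check that $\varphi$ maps the interior of the sector into the interior of the square (it maps lines to lines and preserves the convex side relations because all the relevant points lie on one side of each $E_i$, exactly as used in the proof of Theorem~\ref{GeneralEquation}). Finally, one solves the bisector equation of Theorem~\ref{GeneralEquation} in the square, where the four line equations are simply $x = 0$, $x = 1$, $y = 0$, $y = 1$, collapsing the coefficients $A,\dots,F$ to a much simpler closed form, and pulls the resulting conic back through $\varphi$.

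The main obstacle is bookkeeping rather than conceptual: one must verify that $\varphi$ sends the four edges to the four edges in a consistent cyclic order so that the sector's "in order of intersection" structure ($E_A, s, p, E_B$ and $E_C, t, p, E_D$) is respected — i.e. that the image chords still meet the square's sides in the correct order — and that $\varphi$ is orientation-consistent on the line through $s$ and $p$ so that the signed cross ratio, not merely its absolute value, is preserved. Degenerate configurations where two of the four lines are parallel or where a pair of opposite edges meet on the line at infinity need a word, but these are handled automatically by working in homogeneous coordinates, since a projective map has no difficulty with points or intersections at infinity. Once the order and orientation are pinned down, the rest is the routine linear-algebra computation of $\varphi$ and the substitution $x\mapsto 0,1$, $y \mapsto 0,1$ into the coefficient formulas, which I would not grind through here.
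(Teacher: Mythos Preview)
Your proposal is correct and follows essentially the same approach as the paper: both construct a projective transformation sending the quadrilateral cut out by the four edge-lines to the unit square and invoke cross-ratio invariance to transport the bisector problem. The only difference is bookkeeping---the paper sets up an explicit $8\times 8$ linear system for the entries of the transformation matrix, whereas you use the standard projective-frame normalization trick; your version also flags the ordering/orientation and parallel-edge issues that the paper leaves implicit.
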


\begin{proof}
Suppose we are in a strictly four-edge sector $S(s,t,E_A,E_B,E_C,E_D)$ with edges $E_A,E_C,E_B,E_D$ counter clockwise. Let our edges extend out infinity and let $p_1=E_A \cap E_D$, $p_2=E_A \cap E_C$, $p_3=E_B \cap E_C$, and $p_4=E_B \cap E_D$. Consider the vector:
\[
    Q^T ~ = ~ [q_{1,x},q_{1,y},q_{2,x},q_{2,y},q_{3,x},q_{3,y},q_{4,x},q_{4,y}],
\]
where $q_1=(0,0),$ $q_2=(0,1),$ $q_3=(1,1),$ and $q_4=(1,0)$. Then we can calculate the projective transformation matrix $T$ such that $P'=TP$ when $T$ sends points $P$ on our quadrilateral to the unit square. We force the right corner of $T$ to be 1 to fix our scaling factor.
\begin{figure}[t]
    \centerline{\includegraphics[scale=0.5]{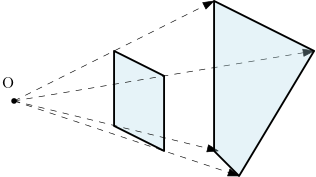}}
    \caption{A projection between two quadrilaterals.}\label{fig:ProjectionQuadrilaterals}
\end{figure} 
\[  \left[ {\begin{array}{c}
    x' \\
    y' \\
    1  \\
  \end{array} } \right]
  \left[ {\begin{array}{ccc}
    t_{11} & t_{12} & t_{13} \\
    t_{21} & t_{22} & t_{23} \\
    t_{31} & t_{32} & 1 \\
  \end{array} } \right]
=\left[ {\begin{array}{c}
    x \\
    y \\
    1  \\
  \end{array} } \right]\]

We solve for the elements of T using the following matrices:
$M V = Q$ where $V = [t_{11}, t_{12}, \ldots]^\intercal$, $Q = [q_{1,x}, q_{1,y}, \ldots]^\intercal$ and $M$ is:
\[
  \left[ \kern-3pt \begin{array}{cccccccc}
    p_{1,x} & p_{1,y} & 1 & 0 & 0 & 0 & -p_{1,x} \SP q_{1,x} & -p_{1,y} \SP q_{1,x}\\
    0 & 0 & 0 & p_{1,x} & p_{1,y} & 1 & -p_{1,x} \SP q_{1,y} & -p_{1,y} \SP q_{1,y}\\
    p_{2,x} & p_{2,y} & 1 & 0 & 0 & 0 & -p_{2,x} \SP q_{2,x} & -p_{2,y} \SP q_{2,x}\\
    0 & 0 & 0 & p_{2,x} & p_{2,y} & 1 & -p_{2,x} \SP q_{2,y} & -p_{2,y} \SP q_{2,y}\\
    p_{3,x} & p_{3,y} & 1 & 0 & 0 & 0 & -p_{3,x} \SP q_{3,x} & -p_{3,y} \SP q_{3,x}\\
    0 & 0 & 0 & p_{3,x} & p_{3,y} & 1 & -p_{3,x} \SP q_{3,y} & -p_{3,y} \SP q_{3,y}\\
    p_{4,x} & p_{4,y} & 1 & 0 & 0 & 0 & -p_{4,x} \SP q_{4,x} & -p_{4,y} \SP q_{4,x}\\
    0 & 0 & 0 & p_{4,x} & p_{4,y} & 1 & -p_{4,x} \SP q_{4,y} & -p_{4,y} \SP q_{4,y}\\
  \end{array} \kern-3pt \right]
% \left[ {\begin{array}{c}
%     t_{11} \\
%     t_{12} \\
%     t_{13} \\
%     t_{21} \\
%     t_{22} \\
%     t_{23} \\
%     t_{31} \\
%     t_{32} \\
%   \end{array} } \right] = \left[ {\begin{array}{c}
%     q_{1,x} \\
%     q_{1,y} \\
%     q_{2,x} \\
%     q_{2,y} \\
%     q_{3,x} \\
%     q_{3,y} \\
%     q_{4,x} \\
%     q_{4,y} \\
%   \end{array} } \right]. 
\]
% DMM: I rephrased this. Please check.
%Since the map is projective it preserves the cross ratios between the four edges and hence solving for the bisector in the unit square is the same as solving it in the arbitrary four-edge case.
Since the Hilbert metric is invariant under projective transformations, the final bisector can thus be computed for the canonical case of the unit square and then mapped back using the inverse transformation.
\end{proof}

Given this observation, we will assume henceforth that any four edges case is in the unit square. 

\begin{lemma}\label{4EdgeHyperbola}
When the four edges corresponding to a sector lie on four distinct lines, the bisector is either a hyperbola or a parabola.
\end{lemma}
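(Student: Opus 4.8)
The strategy is to read the type of the conic from Theorem~\ref{GeneralEquation} directly off its discriminant. Recall that a real conic $Ax^2+Bxy+Cy^2+Dx+Ey+F=0$ is an ellipse, a parabola, or a hyperbola according as $B^2-4AC$ is negative, zero, or positive (degenerate cases aside), so it suffices to show $B^2-4AC\ge 0$ for the bisector in every four-distinct-edge sector. By Observation~\ref{4EdgeProjection} we may carry this out in the unit square, with the supporting lines of $E_A,E_B,E_C,E_D$ being its four sides.

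The structural point is that the quadratic part $Ax^2+Bxy+Cy^2$ of the bisector equation is, reading off the coefficient formulas of Theorem~\ref{GeneralEquation}, exactly the combination $\ell_B\ell_C-k\,\ell_A\ell_D$, where $\ell_X$ denotes the linear form defining edge $X$ and $k$ is the constant from the proof of that theorem; here $k>0$, since it is a product of ratios of the $\ell_X$ evaluated at the interior points $s$ and $t$, all taken with the orientation that makes each form positive on $\interior(\Omega)$. Next I claim that in any four-distinct-edge sector the four edges occur around $\partial\Omega$ in the cyclic order $E_A,E_C,E_B,E_D$: the chords $\chi(s,p)$ and $\chi(t,p)$ both pass through $p$, the rays from $p$ that exit at $E_A$ and at $E_B$ are antipodal, as are those that exit at $E_C$ and at $E_D$, and the exit map from directions at $p$ to $\partial\Omega$ is an orientation-preserving circle homeomorphism; hence $E_A$ and $E_B$ are separated by exactly one of $E_C,E_D$ on each side. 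This is precisely the configuration assumed in Observation~\ref{4EdgeProjection}, so after the projective map $E_A,E_B$ become one pair of opposite sides of the square and $E_C,E_D$ the other pair. Fixing signs so that $\ell_A=x$, $\ell_B=1-x$, $\ell_C=1-y$, $\ell_D=y$, the products $\ell_B\ell_C$ and $\ell_A\ell_D$ are each $xy$ plus lower-order terms, so the quadratic part collapses: $A=C=0$ and $B=1-k$.

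Consequently $B^2-4AC=(1-k)^2\ge 0$, which already yields the statement. Pushing the remaining linear terms of Theorem~\ref{GeneralEquation} through and completing the square, the bisector equation becomes $\bigl(x-\tfrac{1}{1-k}\bigr)\bigl(y-\tfrac{1}{1-k}\bigr)=\tfrac{k}{(1-k)^2}$ whenever $k\ne 1$; since $k>0$ the right-hand side is positive, so this is a genuine (non-degenerate) rectangular hyperbola with axis-parallel asymptotes, and the portion of the bisector inside the sector lies on it. In the exceptional case $k=1$ the quadratic part vanishes identically and the bisector equation reduces to the line $x+y=1$ — the degenerate limit, which one may regard as the parabolic boundary case — and this occurs only on the codimension-one locus of site positions where $k=1$.

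I expect the only real obstacles to be the two bookkeeping points above: first, verifying cleanly that every four-distinct-edge sector does present its edges in the cyclic order $E_A,E_C,E_B,E_D$, so that the canonical quadrilateral of Observation~\ref{4EdgeProjection} genuinely covers all cases (and not only those in which $E_A\parallel E_B$ after transformation); and second, pinning down the orientation conventions so that $k>0$ and the sign of $B^2-4AC$ is never in doubt. Once the quadratic part is identified as $\ell_B\ell_C-k\,\ell_A\ell_D$ the algebra is a single substitution.
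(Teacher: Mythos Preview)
Your proposal is correct and follows essentially the same route as the paper: reduce to the unit square via Observation~\ref{4EdgeProjection}, read off the discriminant from Theorem~\ref{GeneralEquation}, and obtain $B^2-4AC=(1-k)^2\ge 0$. You add welcome extra detail the paper omits---the cyclic-order justification for why $E_A,E_B$ and $E_C,E_D$ really do map to opposite sides of the square, and the explicit check that the $k\ne 1$ case is a genuine (non-degenerate) hyperbola while $k=1$ degenerates to a line---but the core argument is identical.
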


\begin{proof}
Let $S(s,t,E_A,E_B,E_C,E_D)$ be an arbitrary sector with four distinct edges. By applying Observation~\ref{4EdgeProjection}, we may assume without loss of generality that $E_A$, $E_B$, $E_C$, $E_D$ lie on the lines $x=0$, $x=1$, $y=0$, and $y=1$, respectively. The discriminant of equation for our bisector is given by $B^2-4AC$. Substituting in for our variables, we get $(1-k)^2$ as our discriminant, which is nonnegative. 
\end{proof}

Next, let us consider the three-edge case. We will refer to the edge that is hit twice by the Hilbert metric in the sector as the \emph{shared edge}, we have the following three cases, depending on whether the shared edge is behind both points with respect to the sector, in front of both (see Figure \ref{fig:threeEdgeInFront}), or behind one and in front of the other (see Figure \ref{fig:3edgeAllConics}). As in the four-edge case, we can projectively transform the problem to a canonical form, as given in the following lemma. Define the \emph{unit simplex} in $\RE^2$ to be the right triangle with vertices at $(0,0)$, $(1,0)$, and $(0,1)$.

\begin{observation}\label{3edgeProjection}
Bisectors in the three-edge case can be projectively transformed and solved in the unit simplex.
\end{observation}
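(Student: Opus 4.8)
The plan is to mimic the argument used in Observation~\ref{4EdgeProjection}, replacing the four-point correspondence for quadrilaterals with a three-point correspondence for triangles. The key fact is that a projective transformation of the plane is determined by the images of four points in general position, but an affine transformation — which is a special case — is already determined by the images of three points in general position. So the first step is to identify the three lines that bound the sector in the three-edge case: the shared edge (hit twice) together with the two edges that are hit once each by the two chords $\chi(s,p)$ and $\chi(t,p)$. Extending these three lines to infinity, they form a triangle with three vertices $p_1, p_2, p_3$ (the pairwise intersections), and I would set up the correspondence sending $p_1 \mapsto (0,0)$, $p_2 \mapsto (1,0)$, $p_3 \mapsto (0,1)$, so that the three bounding lines map to $y=0$, $x=0$, and $x+y=1$, the edges of the unit simplex.

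Next I would exhibit the transformation explicitly, either as an affine map or, to stay uniform with the quadrilateral case, as a projective map $P' = TP$ with $T$ a $3\times 3$ matrix whose bottom-right entry is fixed to $1$. Concretely, I would write a linear system $MV = Q$ analogous to the one in Observation~\ref{4EdgeProjection}: with only three point-correspondences we get six scalar equations, which exactly determines the six free parameters of an affine transformation (the last row of $T$ being $(0,0,1)$), or alternatively we can append a fourth constraint by choosing the image of one more point (e.g. a point interior to the sector, or the line at infinity) to pin down a genuinely projective map. Either way the system is nonsingular precisely because $p_1, p_2, p_3$ are not collinear, which holds since they are the pairwise intersections of three distinct, non-concurrent lines bounding a nondegenerate sector.

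Finally I would invoke projective invariance of the Hilbert metric, exactly as in the four-edge case: since $H$ is preserved under the transformation, the bisector equation $H(s,p) = H(t,p)$ transforms to the corresponding equation in the image coordinates, so it suffices to solve the problem when the three bounding lines are $x=0$, $y=0$, $x+y=1$ and the sites are the images $Ts$, $Tt$; the true bisector is then recovered by applying $T^{-1}$. The main obstacle is bookkeeping rather than conceptual: one must be careful about which of the three edges plays which role (shared edge behind both sites, in front of both, or mixed — the three subcases noted before the observation) and ensure the chosen vertex-labeling sends the shared edge to a prescribed edge of the simplex consistently across subcases, and one must check that the transformation maps the open sector into the open simplex (orientation/sidedness), so that removing the absolute values in the distance formulas — as was done in the proof of Theorem~\ref{GeneralEquation} — remains valid. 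None of this is deep, so the observation follows.
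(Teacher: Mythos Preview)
Your proposal is correct and follows essentially the same approach as the paper: identify the triangle formed by the three bounding lines, map it to the unit simplex by an affine (hence projective) transformation determined by the three vertex correspondences, and invoke the projective invariance of the Hilbert metric. The paper's version is terser---it simply writes down the explicit $3\times 3$ affine matrix $T$ rather than setting up a linear system---but the underlying idea is identical.
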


\begin{proof}
The affine transformation between triangles is well known and inherently projective. We provide it here. A triangle with corners $p,q,r$ can be projectively transformed into the unit simplex with the matrix $T^{-1}$, where:  
\[
  T=\left[ {\begin{array}{cccccccc}
    p_x-r_x & p_y-r_y & 0\\
    q_x-r_x & q_y-r_y & 0\\
    r_x &r_y & 1\\

  \end{array} } \right].
\]
Observe that $T^{-1}$ sends a point $P$ on the left from our original triangle to the unit simplex.
\end{proof}

\begin{lemma}\label{3edgeLine}
Given any three-edge sector, where the shared edge is in front or behind of both sites, the bisector between the two sites in the sector is a straight line through the vanishing point of the non-shared edges.
\end{lemma}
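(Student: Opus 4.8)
The plan is to feed the general bisector equation of Theorem~\ref{GeneralEquation} the extra structure coming from the coincidence of two of the four defining lines, observe that the resulting conic degenerates into a product of two lines, discard the irrelevant factor, and recognize the remaining factor as a member of the pencil through the vanishing point of the non-shared edges.

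First I would normalize the configuration. By the relabeling symmetry noted at the end of the proof of Theorem~\ref{GeneralEquation} (the bisector in $S(s,t,E_A,E_B,E_C,E_D)$ equals the one in $S(s,t,E_B,E_A,E_D,E_C)$), the case where the shared edge lies \emph{in front of} both sites, i.e.\ $E_B$ and $E_D$ lie on a common line, is carried to the case where it lies \emph{behind} both, i.e.\ $E_A$ and $E_C$ lie on a common line. So assume the latter; after rescaling line equations we may take $(a_1,a_2,a_3)=(c_1,c_2,c_3)$, while $E_B$ and $E_D$ are the two non-shared edges, lying on distinct lines since no two edges of a convex polygon are collinear.

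Next I would substitute $c_i=a_i$ into the coefficient formulas for $A,\dots,F$. Setting $\alpha=b_1-kd_1$, $\beta=b_2-kd_2$, $\gamma=b_3-kd_3$, a direct expansion shows that the bisector equation factors:
\[
  Ax^2+Bxy+Cy^2+Dx+Ey+F=(a_1x+a_2y+a_3)(\alpha x+\beta y+\gamma).
\]
Hence the zero set is the union of the line carrying the shared edge $E_A$ and the line $\ell:(b_1x+b_2y+b_3)=k(d_1x+d_2y+d_3)$. The former lies on $\partial\Omega$ and so does not meet the open sector, which lies in $\interior(\Omega)$; therefore within the sector the bisector is exactly $\ell$. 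Since $E_B$ and $E_D$ are not collinear and since $s,t\in\interior(\Omega)$ force $k\notin\{0,\infty\}$, the three coefficients of $\ell$ are not all zero, so $\ell$ is an honest line.

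Finally, I would read off the pencil structure: $\ell$ is a linear combination of the line $E_B:b_1x+b_2y+b_3=0$ and the line $E_D:d_1x+d_2y+d_3=0$, so at their common point both sides of $\ell$'s defining equation vanish, and $\ell$ passes through $E_B\cap E_D$, the vanishing point of the two non-shared edges. (Composing with the projective map of Observation~\ref{3edgeProjection} to the unit simplex sends this vanishing point to the simplex vertex opposite the shared edge, which is the cleanest way to phrase the conclusion.) The only real work is verifying the factorization and correctly discarding the shared-edge component; after that the pencil observation is immediate, and the one point to watch is confirming that the "in front of both'' case genuinely reduces to the "behind both'' case under the stated symmetry.
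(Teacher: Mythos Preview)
Your argument is correct and takes a genuinely different route from the paper's. The paper proves the lemma \emph{synthetically}: it fixes a point $p$ on the bisector, draws the line from the vanishing point $O$ of the non-shared edges through $p$, and uses the Funk-metric decomposition $H=\tfrac12(F+\overline F)$ together with similar-triangle ratios to show that every point $q$ on $Op$ is still equidistant. Your approach is purely \emph{algebraic}: you specialize the general conic of Theorem~\ref{GeneralEquation} to $c_i=a_i$, observe the clean factorization $(a_1x+a_2y+a_3)\bigl((b_1-kd_1)x+(b_2-kd_2)y+(b_3-kd_3)\bigr)$, discard the $\partial\Omega$ factor, and read off that the surviving line lies in the pencil through $E_B\cap E_D$. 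Your route is shorter and stays entirely within the machinery already set up by Theorem~\ref{GeneralEquation}, requiring no appeal to the Funk metric; the paper's route, on the other hand, gives a geometric explanation of \emph{why} the vanishing point appears (equidistance is preserved along rays through $O$), which your factorization obscures. Both handle the ``in front''/``behind'' duality via the $(E_A,E_B)\leftrightarrow(E_B,E_A)$ symmetry noted after Theorem~\ref{GeneralEquation}.
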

\begin{figure}[h]
    \centerline{\includegraphics[scale=0.5]{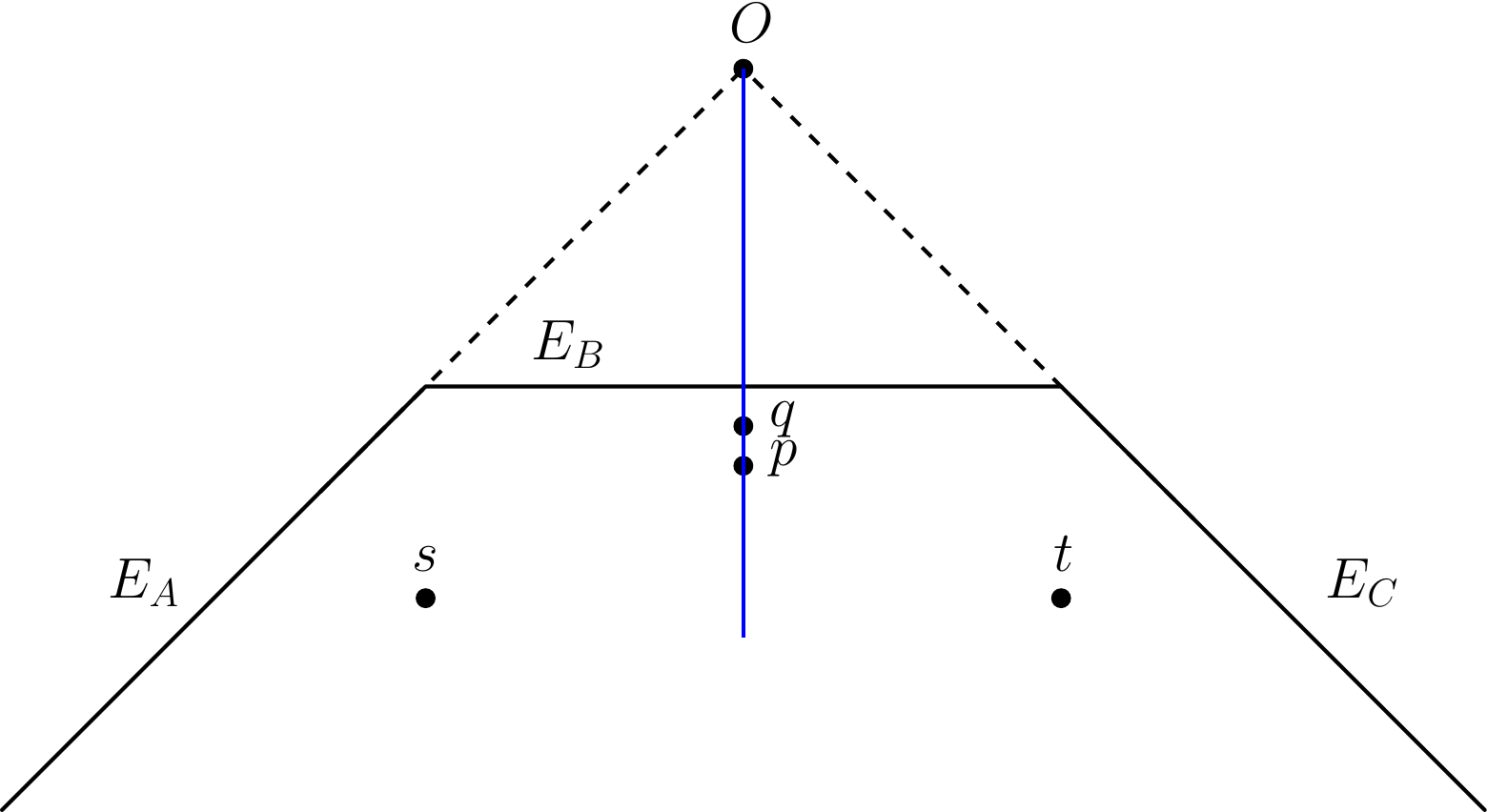}}
    \caption{Illustration for the proof of Lemma \ref{lem:3edgeAllConics} (a bisector in the three-edge case with shared edge in front of both sites).}
    \label{fig:threeEdgeInFront}
\end{figure}
\begin{proof}  We prove this geometrically for both cases $S(s,t,E_A,E_B,E_C,E_D)$ with edges $E_B=E_D$ and $S(s,t,E_A,E_B,E_C,E_D)$ with edges $E_A=E_C$.

Consider the case in which the edge is in front of both sites, $S(s,t,E_A,E_B,E_C,E_B)$. Let $p$ be a point in the sector such that $H(s,p)=H(t,p)$, and let $O$ be vanishing point of $E_A$ and $E_C$. We claim the bisector is a straight line through $Op$. Let $q$ be a point on the segment $Op$ in the sector. Using the geodesic triangle inequality in the Funk metric we get \(F(t,q) = F(t,p) + F(p,q)\) and \(F(s,q) = F(s,p) + F(p,q)\)~\cite{papadopoulos2014funkarxiv}. By definition, $F(s,p)+F(p,s)=F(t,p)+F(p,t)$. Substituting in our previous equations and cancelling $F(p,q)$ we have $F(s,q)+F(p,s)=F(t,q)+F(p,t)$. Using properties of similar triangles and rearranging the equation we can see that  $F(s,q)-F(t,q)=\log((|pE_c|/|tE_c|)\cdot(|sE_A|/|pE_A|))$. Note that $|pE_c|/|pE_A|=|qE_c|/|qE_A|$. Substituting in and using the definition of the Funk metric we get $F(s,q)-F(t,q)=F(q,t)-F(q,s)$, yielding $H(s,q)=H(t,q)$. By choosing $p$ to be the lowest point on the bisector in the sector we get our characterization of the bisector as a line. 

Note the case where the shared edge is behind both sites follows from the symmetry of the general equation of the bisector. 
\end{proof}

%-----------------------------------------------------------------------
\begin{lemma} \label{lem:3edgeAllConics}
Given a three-edge sector in which the shared edge is in front of one site and behind the other, the bisector between the sites can be a conic of any type (ellipse, parabola, or hyperbola) depending on their relative positions. Further, when either site is fixed, there exists a line such that the type of conic is determined by the location of the other site with respect to this line (an ellipse if it lies on one side, hyperbola if on the other side, and parabola if lies on the line). 
\end{lemma}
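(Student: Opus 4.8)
The plan is to work entirely in the canonical frame furnished by Observation~\ref{3edgeProjection}, placing the unit simplex with the shared edge on one of the three lines $x=0$, $y=0$, or $x+y=1$ and the two non-shared edges on the other two. Since the shared edge is in front of one site and behind the other, the four line equations $a_1x+a_2y+a_3$, $b_1x+b_2y+b_3$, $c_1x+c_2y+c_3$, $d_1x+d_2y+d_3$ feeding Theorem~\ref{GeneralEquation} satisfy $E_A = E_C$ (or $E_B = E_D$) — i.e.\ two of the four lines coincide — but the ``constant'' $k$ that records the Hilbert balance is genuinely a function of the free site's coordinates, not a fixed number. So the first step is to substitute the coincidence $E_A=E_C$ (say) into the coefficient formulas
\begin{align*}
    A &= b_1c_1 - a_1d_1k, & B &= b_2c_1 + b_1c_2 - a_1d_2k - a_2d_1k, & C &= b_2c_2 - a_2d_2k,
\end{align*}
and read off the discriminant $\Delta = B^2 - 4AC$ as an explicit function of $k$. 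With the coincidence in place this collapses to a quadratic (in fact, I expect, a perfect-square-plus-linear or a genuine quadratic) in $k$ of the form $\alpha k^2 + \beta k + \gamma$, whose sign is what governs the conic type: $\Delta<0$ gives an ellipse, $\Delta=0$ a parabola, $\Delta>0$ a hyperbola.

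Next I would show that fixing one site makes $k$ an affine (Möbius-type, but in the relevant region monotone) function of the other site's position along any line, and more importantly that the \emph{set} of free-site positions producing a given value of $k$ is exactly a line through the vanishing point of the two non-shared edges — this is the same geometric fact exploited in Lemma~\ref{lem:3edgeAllConics}'s sibling, Lemma~\ref{3edgeLine}, namely that the Funk-metric spoke ratios are constant along such rays. Composing ``free site $\mapsto k$'' (whose level sets are these rays) with ``$k \mapsto \operatorname{sign}\Delta(k)$'' (which changes sign at the roots of the quadratic $\alpha k^2+\beta k+\gamma$), one sees that the ellipse/parabola/hyperbola trichotomy on the free site is cut out by the preimages of the roots $k_1, k_2$ — i.e.\ by one or two of these special rays. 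To land the clean statement ``there exists a single line,'' I need the quadratic in $k$ to have (within the admissible range of $k$, which is constrained because $p$, $s$, $t$ all lie on the same side of each line and inside the simplex) exactly one sign change; the second root should fall outside the feasible window. Verifying that — controlling the location of the roots of $\alpha k^2+\beta k+\gamma$ relative to the interval of geometrically realizable $k$ — is the step I expect to be the main obstacle, since it is where the ``three-edge, mixed front/behind'' hypothesis must be used in an essential way rather than just formally.

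Finally, for the existence half of the statement (that each conic type is actually attained), it suffices to exhibit, in the unit-simplex model, one placement of $s$ and $t$ for each sign of $\Delta$; because $\Delta(k)$ is continuous and takes both signs as the free site sweeps across the critical ray, the intermediate case $\Delta=0$ (parabola) is automatic, so I only need two explicit witnesses plus a continuity argument. I would close by remarking that the symmetry noted at the end of Theorem~\ref{GeneralEquation} — the bisector in $S(s,t,E_A,E_B,E_C,E_D)$ equals that in $S(s,t,E_B,E_A,E_D,E_C)$ — reduces the case $E_B=E_D$ to the case $E_A=E_C$, so there is no loss of generality in having fixed the shared edge's position at the outset.
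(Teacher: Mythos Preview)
Your overall strategy---reduce to the unit simplex, compute the discriminant $\Delta=B^{2}-4AC$ as a quadratic in the constant $k$, locate its sign changes, and then argue the critical condition on $k$ cuts out a line in the free site---is exactly the paper's approach. But you have a genuine gap at the very first step: you have misidentified which pair of edges coincide. In the sector notation $S(s,t,E_A,E_B,E_C,E_D)$ the chord through $s$ meets the boundary in the order $E_A,s,p,E_B$ and the chord through $t$ in the order $E_C,t,p,E_D$; hence ``shared edge in front of one site and behind the other'' means $E_B=E_C$ (in front of $s$, behind $t$) or $E_A=E_D$. The coincidences $E_A=E_C$ and $E_B=E_D$ that you propose are precisely the ``behind both'' and ``in front of both'' cases already handled by Lemma~\ref{3edgeLine}, and if you plug $E_A=E_C$ into the coefficient formulas you will find the conic degenerates to a line for every $k$---so your computation will never produce an ellipse or a hyperbola, and the argument stalls.

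Once you use the correct coincidence $E_B=E_C$ (put, say, $E_A$ on $y=0$, the shared edge on $x+y=1$, and $E_D$ on $x=0$), the paper's computation gives $\Delta=(2-k)^{2}-4=k(k-4)$ with $k>0$ automatically, so the single relevant sign change at $k=4$ drops out immediately and the ``main obstacle'' you anticipate disappears. Two smaller points: your claim that the level sets of $k$ in the free site are rays through the vanishing point of the two non-shared edges is not correct (in the setup above they pass through a vertex on the shared edge, not through the origin), and it is also unnecessary---the paper simply observes that the condition $k=4$, written out, is linear in each site's coordinates when the other is fixed. Finally, the symmetry $S(s,t,E_A,E_B,E_C,E_D)\leftrightarrow S(s,t,E_B,E_A,E_D,E_C)$ you invoke does reduce $E_A=E_D$ to $E_B=E_C$, so that closing remark is fine once the labels are fixed.
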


\begin{proof} We prove this algebraically for, without loss of generality, $S(s,t,E_A,E_B,E_C,E_D)$ with edges $E_B=E_C$, and all other edges lying on distinct lines.

By Observation~\ref{3edgeProjection}, it suffices to assume that the triangle has been mapped to the unit simplex. Let the edge $E_A$ lie on the line $y=0$, $E_B$ and $E_C$ lie on the line $x+y-1=0$, and $E_D$ lie on the line $x=0$. By applying Theorem \ref{GeneralEquation} with these values, the bisector is given by $x^2 + (2-k)xy + y^2 -2x -2y +1 = 0$, where $k=\frac{s_x+s_y-1}{t_x}\frac{t_x+t_y-1}{s_y}$. The discriminant of the equation of the bisector is $(2-k)^2-4$, which reduces to $k(k-4)$. 
By standard results on conics, the type of conic depends on the sign of the discriminant. Note that $k$ is always positive since $(s_x,s_y)$ and $(t_x,t_y)$ are in the unit simplex. Given this, the discriminant changes sign when $k = 4$. Thus, the conic type is given by the sign of the function $(s_x+s_y-1)(t_x+t_y-1) - 4 s_x t_x$. Observe that when either $s$ or $t$ is fixed, this is a linear equation in coordinates of the other site. Thus, fixing $s$ yields a line, and the type of conic (ellipse, hyperbola, or parabola) is determined by the location of $t$ relative to this line. The elliptical case holds when $t$ is on one side, the hyperbolic case when it is on the other, and the parabolic case when it lies on the line. This applies symmetrically for $t$.
\end{proof}

Figure~\ref{fig:3edgeAllConics} illustrates four examples of conics as described by Lemma~\ref{lem:3edgeAllConics}. The site coordinates and discriminant are given in Table~\ref{tab:3edgeAllConics}. Note that the ellipse case and hyperbola case happen in general position.

\begin{figure}[htbp]
    \centerline{\includegraphics[scale=0.165]{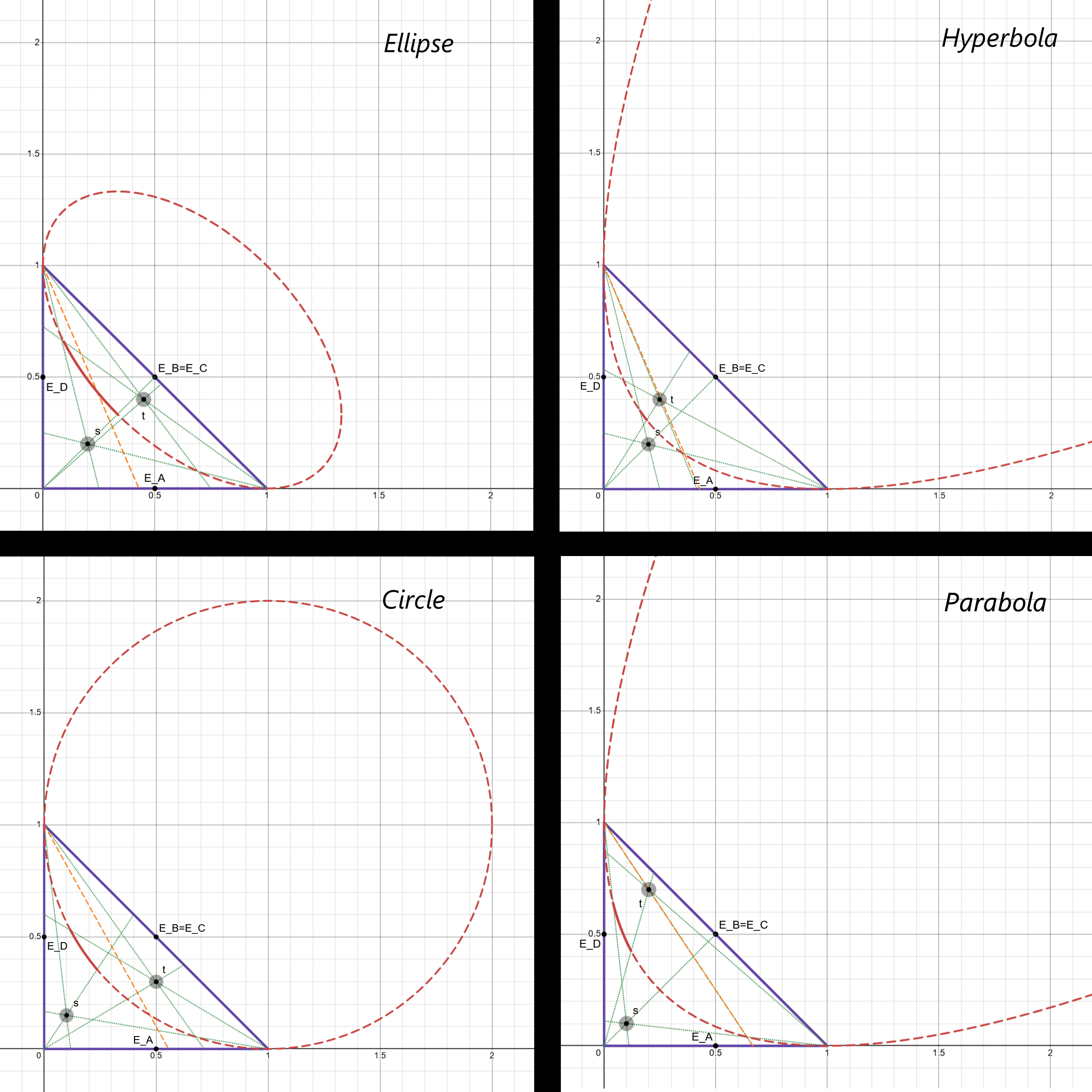}}
    \caption{The bisector conics for the cases given in Table~\ref{tab:3edgeAllConics} visualized.} \label{fig:3edgeAllConics}
\end{figure} 

\begin{table}[ht]
\caption{The bisector cases illustrated in Figure~\ref{fig:3edgeAllConics} for the three-edge case.} \label{tab:3edgeAllConics}
\begin{center}
\begin{tabular}{ |c|c|c|c|c| } 
 \hline
   & $s$ & $t$ & Discriminant \\ 
 \hline
 Ellipse & (0.2,0.2) & (0.45,0.4) & -3 \\
 \hline
 Hyperbola & (0.2,0.2) & (0.25,0.4) & 0.84 \\ 
 \hline
 Circle & (0.1,0.15) & (0.5,0.3) & -4 \\
 \hline
 Parabola & (0.1,0.1) & (0.2,0.7) & 0 \\
 \hline
\end{tabular}
\end{center}
\end{table}

Now that we have evaluated the three-edge case we will evaluate the two-edge case.

Note that given a sector $S(s, t, E_A, E_B, E_C, E_D)$ with $E_A=E_D$ and $E_B=E_C$, we can choose a point $p$ on $E_A$, $q$ on $E_B$, and let $E_A\cap E_B = r$ and affinely map the triangle $pqr$ into the unit simplex using Observation \ref{3edgeProjection}.

\begin{lemma}
Given a two edge sector, $S(s, t, E_A, E_B,$ $E_C, E_D)$ with $x=0$ as $E_A=E_D$, and $y=0$ as $E_B=E_C$, the bisector between $s$ and $t$ is given by the following linear equation: $y=\sqrt{k}x$ which passes through the vanishing point of $E_A$ and $E_B$ where $k = (s_y/t_x)\cdot(t_y/s_x)$.
\end{lemma}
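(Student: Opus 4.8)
The plan is to apply Theorem~\ref{GeneralEquation} directly to the specialized edge data and then simplify. After the affine map of Observation~\ref{3edgeProjection}, we have $E_A = E_D$ on the line $x = 0$ and $E_B = E_C$ on the line $y = 0$; that is, in the notation of Theorem~\ref{GeneralEquation} we take $(a_1,a_2,a_3) = (d_1,d_2,d_3) = (1,0,0)$ and $(b_1,b_2,b_3) = (c_1,c_2,c_3) = (0,1,0)$. First I would substitute these into the coefficient formulas for $A,\dots,F$ from Theorem~\ref{GeneralEquation}. Most coefficients vanish: with $a_2 = a_3 = b_1 = b_3 = c_1 = c_3 = d_2 = d_3 = 0$, one gets $A = b_1 c_1 - a_1 d_1 k = -k$, $C = b_2 c_2 - a_2 d_2 k = 1$, and $B = D = E = F = 0$. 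Hence the bisector equation $A p_x^2 + C p_y^2 = 0$ becomes $p_y^2 = k\, p_x^2$, i.e.\ $y = \sqrt{k}\,x$ after taking the branch that lies inside the sector (the sector forces $x,y > 0$, so we keep the positive root).

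Next I would pin down the constant $k$. By the definition of $k$ in the proof of Theorem~\ref{GeneralEquation} as the ``left-hand side'' collecting the site-dependent constants, $k = \dfrac{b_1 s_x + b_2 s_y + b_3}{d_1 t_x + d_2 t_y + d_3}\cdot\dfrac{c_1 t_x + c_2 t_y + c_3}{a_1 s_x + a_2 s_y + a_3}$. Substituting $(a_i),(b_i),(c_i),(d_i)$ as above gives $k = \dfrac{s_y}{t_x}\cdot\dfrac{t_y}{s_x}$, matching the claimed formula. Finally, to see that the line $y=\sqrt{k}\,x$ passes through the vanishing point of $E_A$ and $E_B$: here $E_A$ lies on $x=0$ and $E_B$ on $y=0$, two lines that meet at the origin $(0,0)$, which in the affine (rather than projectively-extended) picture is the shared corner; the line $y = \sqrt{k}\,x$ visibly passes through $(0,0)$. (If one wants the genuine ``vanishing point'' interpretation, one notes that the two non-shared edges of the original triangle $pqr$ are the images of $x=0$ and $y=0$, and a line through the common intersection of two lines is projectively a line through their vanishing point; this is the same reasoning invoked in Lemma~\ref{3edgeLine}.)

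I expect the only real content is bookkeeping: verifying that all eight of the listed constants that must vanish actually do, and that $k$ as defined specializes correctly — there is a minor transcription subtlety in Theorem~\ref{GeneralEquation} (the stray ``$b_{1}t$'' typo in one displayed line, and the ordering of which edge is $E_A$ versus $E_C$) that one should double-check against the sector convention $\chi(s,p) = E_A,s,p,E_B$ and $\chi(t,p) = E_C,t,p,E_D$, so that the numerator/denominator roles of $s$ and $t$ in $k$ come out as $(s_y/t_x)(t_y/s_x)$ and not its reciprocal or some mixed variant. The geometric claim about the vanishing point is immediate once the algebra is in place, so the substitution-and-simplify step is both the main step and the only place an error could creep in.
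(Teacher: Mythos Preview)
Your proposal is correct and follows the same approach as the paper: substitute the specific line coefficients for $E_A=E_D$ on $x=0$ and $E_B=E_C$ on $y=0$ into the general conic of Theorem~\ref{GeneralEquation}, obtain $-kx^2+y^2=0$ with $k=(s_y/t_x)(t_y/s_x)$, and take the positive root. Your write-up is in fact more explicit than the paper's (which simply says ``plugging in'' and ``solving''), but the method is identical.
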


\begin{proof} 

Plugging in to the general equation for the bisector in a sector give us $-kx^2+y^2=0$ with $k=(s_y/t_x)\cdot(t_y/s_x)$ and solving gives us one viable solution $y=\sqrt{k}x$. It is clear that this goes through the vanishing point of $E_A$ and $E_B$.
\end{proof}

%-----------------------------------------------------------------------
\section{Analysis of Hilbert Voronoi Diagrams}
%-----------------------------------------------------------------------

The Hilbert metric has some unique features that are not present in the Euclidean metric, In particular, in degenerate cases bisectors in the Hilbert metric can contain two dimensional polygons. Next, we present a necessary and sufficient condition for this occurrence. 

\begin{figure}[t]
    \centerline{\includegraphics[scale=0.55]{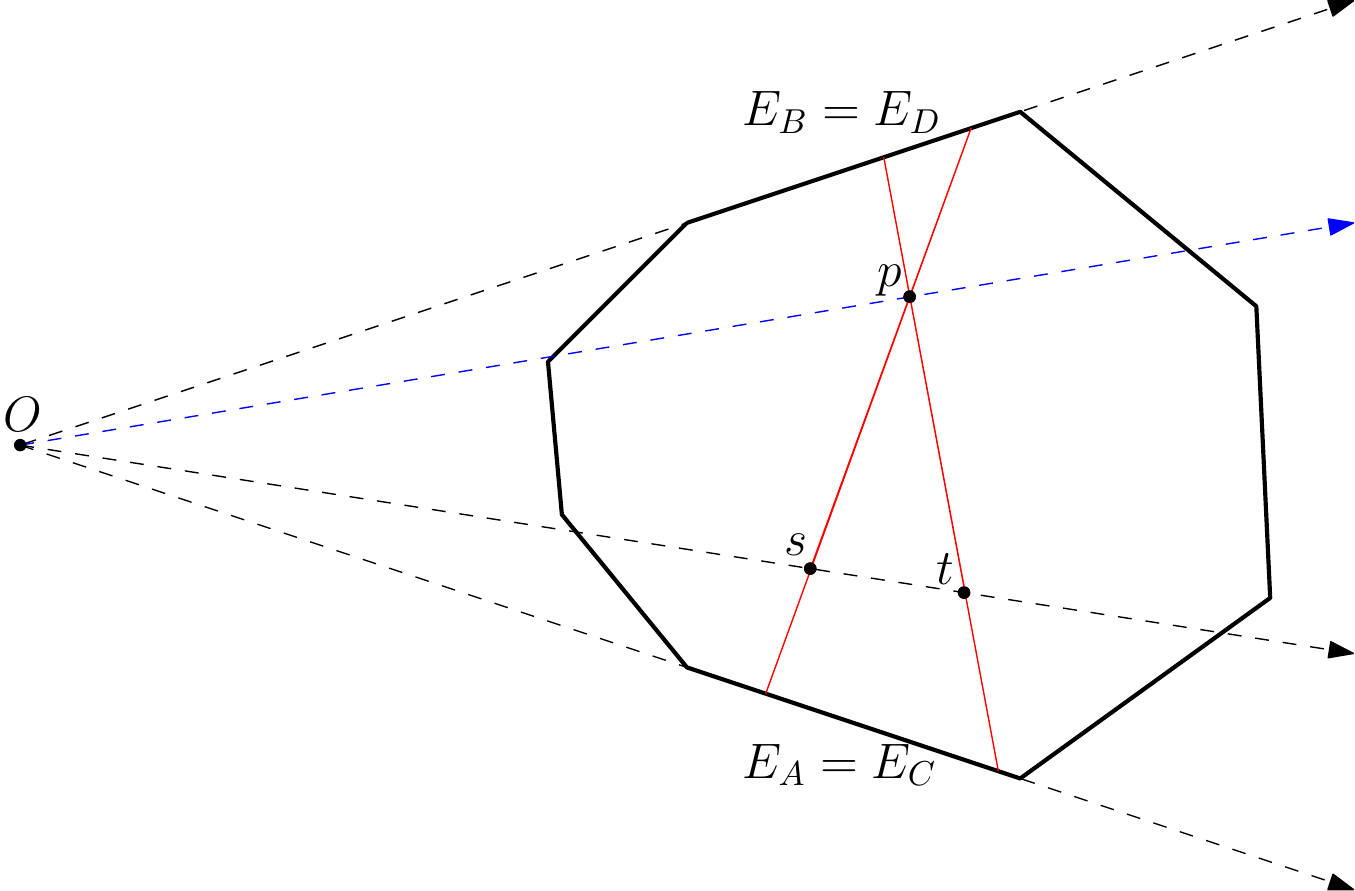}}
    \caption{The point $p$ is equidistant from both sites $s$ and $t$ by the invariance of the cross ratio.}\label{fig:2DBisectors}
\end{figure}

\begin{lemma}\label{lem:2DBisectors}  The bisector between two sites $s$ and $t$ contains a Hilbert ball of some radius $r$ if and only if both sites lie on a ray through a vertex $V$ at the vanishing point of two edges $E_A$ and $E_B$ of $\Omega$ and the sector $S(s,t,E_A,E_B,E_B,E_A)$ exists.
\end{lemma}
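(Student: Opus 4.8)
The plan is to prove both directions by reducing to the situation where $s$ and $t$ lie on a common spoke through the vertex $V = E_A \cap E_B$ (extended), and then exploiting the invariance of the cross ratio under the central projection from $V$. The key geometric fact I would isolate first is the following: if $p$ is any point in the sector $S(s,t,E_A,E_B,E_B,E_A)$, then the chords $\chi(s,p)$ and $\chi(t,p)$ both enter and leave $\Omega$ through $E_A$ and $E_B$, and — crucially — the line through $V$ and $p$ meets $E_A$ at a single point $x_p$ and $E_B$ at a single point $y_p$ that are shared by both chords' supporting computation, because the map ``project from $V$'' collapses the relevant incidences. I would make this precise by noting that $s$, $t$, and $V$ being collinear means the triangles $V x_p s$ and $V y_p s$ (resp.\ with $t$) are in perspective, so the cross ratios $(s, p; y, x)$ and $(t, p; y', x')$ defining $H_\Omega(s,p)$ and $H_\Omega(t,p)$ are equal whenever $p$ lies on an appropriate ray — see Figure~\ref{fig:2DBisectors}.

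For the ``if'' direction I would argue as follows. Assume $s,t,V$ are collinear with $V$ the vanishing point of $E_A,E_B$, and that $S(s,t,E_A,E_B,E_B,E_A)$ is nonempty. Pick any point $p$ in that sector. Apply a projective transformation (legitimate by the projective invariance of the Hilbert metric, as used in Observations~\ref{4EdgeProjection} and~\ref{3edgeProjection}) sending $E_A$ and $E_B$ to two parallel lines, so that $V$ goes to the point at infinity in their common direction; then the line $sp$, the line $tp$, and the line through $s$ and $t$ are all parallel to... no — rather, $s$, $t$ lie on a line through infinity in a transverse direction, and the chords $\chi(s,p)$, $\chi(t,p)$ become segments cut off between the two parallel lines. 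In this normalized picture a direct computation with the cross ratio (using similar triangles, exactly the $D(\cdot)/D(\cdot)$ substitution from the proof of Theorem~\ref{GeneralEquation}) shows $H(s,p) = H(t,p)$, because the only edges involved are $E_A$ and $E_B$ and the ratio of distances to those two parallel lines is the same from $s$ and from $t$ along the transverse line. Hence every point of the sector is on the bisector, and since the sector is two-dimensional it contains a Hilbert ball of some positive radius $r$ (a Hilbert ball being a full-dimensional convex body, by Nielsen and Shao~\cite{nielsen2017balls}).

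For the ``only if'' direction, suppose the bisector contains a Hilbert ball $B$, hence a two-dimensional region $R$ of points equidistant from $s$ and $t$. Since the bisector restricted to any sector is a conic (Theorem~\ref{GeneralEquation}, Lemma~\ref{4EdgeHyperbola}, Lemma~\ref{lem:3edgeAllConics}) or a line, a two-dimensional piece can only occur if in some sector the conic equation degenerates to $0 = 0$, i.e.\ all six coefficients $A,\dots,F$ of Theorem~\ref{GeneralEquation} vanish simultaneously. I would show this forces $b_1 c_1 = a_1 d_1 k$, $b_2 c_2 = a_2 d_2 k$, etc., and, chasing these relations, that the four lines $E_A,E_B,E_C,E_D$ cannot be in ``general'' four-edge position nor in the three-edge positions covered by Lemmas~\ref{3edgeLine} and~\ref{lem:3edgeAllConics} (those give honest curves), leaving only the configuration $E_C = E_B$, $E_D = E_A$ — that is, the sector is $S(s,t,E_A,E_B,E_B,E_A)$ — together with the constant $k$ being exactly $1$. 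Then, interpreting $k = 1$ via its geometric meaning (the product of distance ratios of $s$ and $t$ to $E_A$ and $E_B$ equals one), I would deduce that $s$ and $t$ lie on a common line through $V = E_A \cap E_B$: indeed $k=1$ says $s$ and $t$ have equal ``Funk-type'' position between the two lines, which pins them to a single ray from the vanishing point.

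The main obstacle I anticipate is the ``only if'' direction's case analysis: one has to be careful that a full-dimensional bisector piece really cannot sneak out of a single degenerate sector and that the vanishing of all conic coefficients genuinely characterizes the $S(s,t,E_A,E_B,E_B,E_A)$ configuration rather than some other coincidental alignment of four lines. I would handle this by treating the coefficient equations $A=B=C=0$ first (these involve only the directional parts $a_i,b_i,c_i,d_i$ of the lines) to conclude the lines pair up as $\{E_A,E_B\}$ appearing on both sides, and only then use $D=E=F=0$ together with $k=1$ to locate $s$ and $t$ on the spoke through $V$; the projective normalization from Observation~\ref{4EdgeProjection} keeps the algebra manageable. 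A secondary subtlety is confirming that once the region is two-dimensional it actually contains a metric ball and not merely a flat patch — this follows because the set of points equidistant from $s$ and $t$ is relatively closed and, being full-dimensional and convex in this configuration, contains an open Hilbert ball, whose radius we take to be $r$.
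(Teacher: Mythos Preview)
Your ``if'' direction coincides with the paper's: both appeal to the invariance of the cross ratio under the central projection from $V$ (you make this explicit by projectively sending $E_A,E_B$ to parallel lines; the paper simply cites cross-ratio invariance and points to the figure).

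Your ``only if'' direction takes a genuinely different route. The paper argues geometrically: it first shrinks the ball to a sub-ball lying in a single sector, then shows that if that sector involved more than two distinct edge-lines one could slide $p$ along a ray through the vanishing point of one site's edge-pair, fixing the distance to that site while changing the distance to the other --- contradicting equidistance on an open set; having reduced to two edges, a short continuity argument forces $s$ and $t$ onto a common ray through $V$. Your route is algebraic: a two-dimensional equidistant patch inside a sector forces the conic of Theorem~\ref{GeneralEquation} to vanish identically, i.e.\ $\ell_B\,\ell_C \equiv k\,\ell_A\,\ell_D$ as polynomials, and unique factorization of linear forms then pins down the edge coincidences; after that, $k=1$ reads as $\ell_A(s)/\ell_B(s)=\ell_A(t)/\ell_B(t)$, which is exactly the condition that $s,t,V$ are collinear. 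This is clean and more systematic than the paper's hands-on argument, and it leverages Theorem~\ref{GeneralEquation} directly rather than re-deriving movement arguments. One point to watch when you execute the factorization: since $E_A\neq E_B$ and $E_C\neq E_D$, the identity $\ell_B\ell_C=k\,\ell_A\ell_D$ forces $E_C=E_A$ and $E_D=E_B$, not $E_C=E_B$, $E_D=E_A$ as you (and the lemma's label) write; with that pairing the conic becomes $(1-k)\,\ell_A\ell_B$, and your $k=1$ step then goes through exactly as planned.
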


\begin{proof}
($\Longleftarrow$) Consider that the Hilbert distance from $s$ and $t$ to any point $p$ in $S(s,t,E_A,E_B,E_B,E_A)$, from the perspective of both $s$ and $t$, is equal by the invariance of the cross ratio. Hence, the entire sector in which both $s$ and $t$ are between $E_A$ and $E_B$ is part of the bisector between $s$ and $t$. By assumption, $\Omega$ is a bounded convex polygonal body, so our bisector will contain some fully 2-dimensional region. Note that we can always fit a Euclidean ball in any 2-dimensional region, and by we therefore must be able to fit a Hilbert ball in one as well  \cite{papadopoulos2014funkarxiv}.

($\Longrightarrow$) Suppose the bisector between two sites $s$ and $t$ contains a Hilbert ball $N$. Then there exists a ball $N'$ nested in $N$ so that $N'$ lies entirely within one sector $T$. Note that the cross ratio between either site and any point in $N'$ must be equivalent. We assert that $T$ is defined by at most two edges of $\Omega$.

Suppose that there were more than two unique edges defining $T$. Consider a point, $p$, in $N'$. Take a pair of edges with respect to one site and consider the ray from the vanishing point of that pair of edges through $p$. Note that moving $p$ along this line will not change its distance to the site, however, since this sector is defined by more than two edges, it will from the other site. Hence, there can be only two edges of $\Omega$ defining $T$.

By the continuity of the Hilbert Metric, it must be that both sites lie on the same side of the bisector in T. Otherwise, we could move closer to one site and further from the other. Since both sites lie on the same side of their bisector, share two edges with respect to every point in $N'$, and are the same distance to every point in $N'$, they must lie on a ray through $o$.
\end{proof}

Given this, we can see that any dynamic Hilbert Voronoi diagram will contain discontinuities whenever a site passes over a ray from a vanishing point of a pair of edges and another site. (See Figure~\ref{fig:discontinuties} for an example.)

\begin{figure}[t]
    \centerline{\includegraphics[scale=0.25]{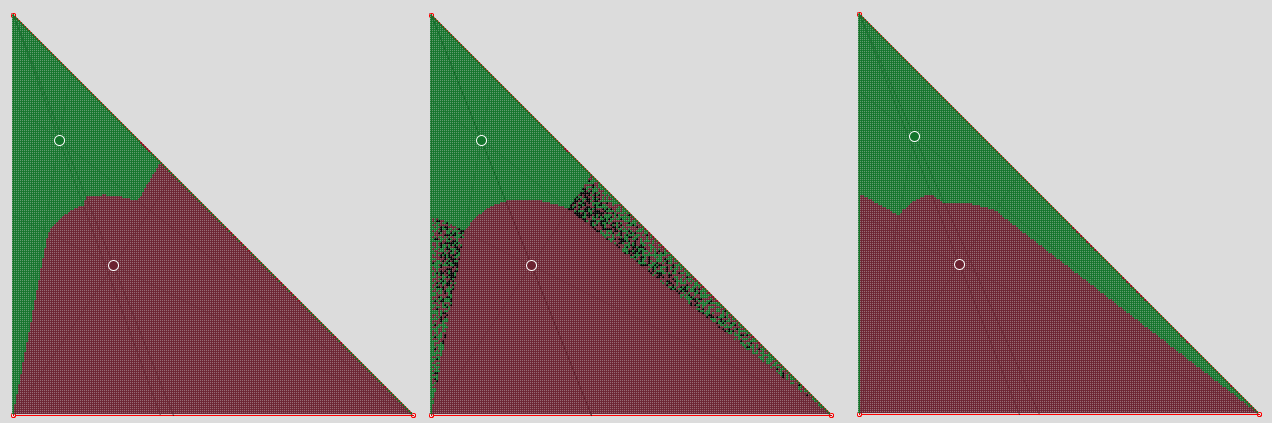}}
    \caption{A discontinuity in the Hilbert metric Voronoi.}\label{fig:discontinuties}
\end{figure}

Another property of Hilbert Voronoi diagrams that is distinct from Euclidean Voronoi diagrams is that three sites in general position do not necessarily create a Voronoi vertex. In particular given two site $s$ and $t$, there exists a space of positive measure such no ball containing a point in that space on its boundary can contain both $s$ and $t$ on its boundary.

\begin{lemma}\label{lem: voronoi vertex} There exists a quadrilateral region $Z$ of positive measure between any two sites $s$ and $t$ in a convex polygonal $\Omega$, such that for all $r \in Z$, there is no Hilbert ball whose boundary passes through $r$, $s$ and $t$.
\end{lemma}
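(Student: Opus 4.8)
The plan is to phase the problem through bisectors. Write $\beta_{st}$ for the Hilbert bisector of $s$ and $t$. A Hilbert ball with $r$, $s$, $t$ on its boundary has a center $c$ with $H(c,s)=H(c,t)=H(c,r)=:\rho$, so $c\in\beta_{st}\cap\beta_{sr}$; conversely any point of $\beta_{st}\cap\beta_{sr}$ is such a center. Hence it suffices to produce a quadrilateral $Z$ of positive measure lying between $s$ and $t$ with $\beta_{st}\cap\beta_{sr}=\varnothing$ for every $r\in Z$.

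First I would normalize. Let the line through $s$ and $t$ leave $\Omega$ through edge $E_A$ behind $s$ and edge $E_B$ beyond $t$, and apply a projective transformation (as in Observation~\ref{4EdgeProjection}) sending $E_A$ to $x=0$ and $E_B$ to $y=0$, so that $\Omega$ sits in the open first quadrant and the vanishing point $O=E_A\cap E_B$ is the origin. Let $\Sigma$ be the two-edge sector $S(s,t,E_A,E_B,E_B,E_A)$ containing the open segment $st$. By the two-edge sector analysis, for any $a,b\in\Sigma$ whose chord still leaves $\Omega$ through $E_A$ and $E_B$ one has $H(a,b)=\tfrac12\lvert\phi(a)-\phi(b)\rvert$ with $\phi(p)=\ln(p_x/p_y)$: on $\Sigma$ the Hilbert metric collapses to a one-dimensional metric whose level sets are the rays from $O$, and $\beta_{st}\cap\Sigma$ is a segment of the ray $\{\phi=\tfrac12(\phi(s)+\phi(t))\}$ from $O$. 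Also $\Sigma$ is a quadrilateral, being the intersection of a wedge at $s$ bounded by two spokes through $s$ with the analogous wedge at $t$, so that its vertices are $s$, $t$, and two spoke intersection points, with $st$ a diagonal. I would then take $Z$ to be the open sub-quadrilateral of $\Sigma$ cut off by the ray $Os$ and the ray from $O$ through the metric midpoint of $st$, i.e.\ the points $r\in\Sigma$ with $\phi(s)<\phi(r)<\tfrac12(\phi(s)+\phi(t))$; these all lie strictly metrically between $s$ and $t$ and form a quadrilateral of positive measure between $s$ and $t$.

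Now fix $r\in Z$. Then $s,t,r\in\Sigma$ have pairwise distinct $\phi$-values, and the chords $\chi(s,t),\chi(s,r),\chi(t,r)$ all leave $\Omega$ through $E_A$ and $E_B$, so $H$ restricted to $\{s,t,r\}$ is $\tfrac12\lvert\Delta\phi\rvert$. If a center $c$ with $H(c,s)=H(c,t)=H(c,r)=\rho$ also lay in $\Sigma$ and its chords to $s,t,r$ also left $\Omega$ through $E_A$ and $E_B$, then $\phi(s),\phi(t),\phi(r)\in\{\phi(c)-\rho,\phi(c)+\rho\}$, which is impossible for three distinct values. Inside the two-edge sectors, $\beta_{st}$ and $\beta_{sr}$ are distinct rays from $O\notin\Omega$, hence already disjoint there. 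So the real task is to exclude a center $c$ that escapes $\Sigma$ (or that lies in $\Sigma$ but is forced onto a ray through $O$ of a different $\phi$-value because its chord to $r$ leaves through other edges).

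For this I would use that every Voronoi cell is star-shaped (Section~\ref{sec:hilbert geodesics}): both $\beta_{st}$ and $\beta_{sr}$ are arcs from $\partial\Omega$ to $\partial\Omega$ meeting each ray out of $s$ at most once, so they are disjoint as soon as their endpoint pairs on $\partial\Omega$ are nested, not interleaved, in the cyclic order around $s$. I would compute the endpoints of $\beta_{st}$ on $\partial\Omega$ as the zeros of the piecewise-smooth function $w\mapsto\lim_{p\to w}\bigl(H(s,p)-H(t,p)\bigr)$ (explicitly a product of distance ratios to the edge containing $w$ and to the edges through $w$ opposite $s$ and $t$), and likewise for $\beta_{sr}$, and then show that for $r\in Z$ — where $r$ is metrically between $s$ and $t$ — the two endpoint pairs are nested, so that $\beta_{sr}$ stays strictly on the $s$-side of $\beta_{st}$ out to the boundary. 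Combined with the disjointness already established inside the two-edge sectors, this gives $\beta_{st}\cap\beta_{sr}=\varnothing$ and hence the theorem. I expect this non-interleaving of the bisectors' boundary endpoints to be the main obstacle; the projective normalization and the two-edge computation are routine.
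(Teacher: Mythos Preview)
Your reduction to showing $\beta_{st}\cap\beta_{sr}=\varnothing$ is sound, but the argument you propose for establishing that disjointness does not go through, and the region $Z$ you choose is likely too large.

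The paper argues very differently. It takes $Z$ to be the intersection, over \emph{all} pairs of edges $E_i,E_j$ of $\Omega$, of the strip bounded by the rays $O_{i,j}s$ and $O_{i,j}t$ (where $O_{i,j}$ is the vanishing point of $E_i,E_j$). It then shows directly that every Hilbert ball with $s$ and $t$ on its boundary must contain this $Z$ in its interior, using only the Nielsen--Shao fact that each edge of a Hilbert ball lies on a line through some $O_{i,j}$: such an edge cannot separate any point of $Z$ from the interior while keeping both $s$ and $t$ on the ball. Your $Z$ is cut out using only the single vanishing point $O$ of $E_A$ and $E_B$ (together with the spokes bounding $\Sigma$), so there is no reason it should lie inside the strip $R_{i,j}$ for other pairs $(i,j)$; for $r\in Z$ outside some $R_{i,j}$ the containment argument fails, and one should expect circumscribing balls to exist.

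More concretely, the implication ``non-interleaved boundary endpoints $\Longrightarrow$ disjoint bisectors'' is not valid, even granting star-shapedness. Star-shapedness of the two Voronoi cells only says that each of $\beta_{st}$ and $\beta_{sr}$ is a radial graph from $s$ over some arc of directions $I_{st}$, $I_{sr}$. If the endpoint pairs are nested on $\partial\Omega$, one of these arcs contains the other, and over the smaller arc both radial graphs are defined; nothing prevents them from crossing there. (Indeed, if $C,D\in\partial\Omega$ lie on the $s$-side arc of $\beta_{st}$ while $r$ lies in the direction of $t$ from $s$, then $I_{st}\subset I_{sr}$ and a crossing is exactly a point equidistant from $s,t,r$.) So even if you could verify the nesting of endpoints, that would not finish the proof. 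The missing ingredient is precisely what the paper supplies by intersecting over all vanishing points: it is that global constraint on $Z$ that forces every ball through $s$ and $t$ to swallow $Z$, and hence forces $\beta_{sr}$ to stay on the $s$-side of $\beta_{st}$ for $r\in Z$.
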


\begin{proof}
Let $\Omega$ have $m$ edges, denoted $\{E_1, \ldots, E_m\}$.
Let $s$ and $t$ be two sites in $\Omega$. Take $O$ to be the set of vanishing points on pairs of edges $O= \{O_{1, 2}, O_{1, 3}, \dots, O_{m-1, m}\}$ when $O_{i,j}$ refers to the vanishing point of edges $E_i$ and $E_j$. Let $R_{i,j}$ be the area between rays $O_{i,j}s$ and $O_{i,j}t$ intersected with $\Omega$. Let $R=\{R_{1, 2}, R_{1, 3}, \dots, R_{m-1, m}\}$.

Let $Z=\bigcap_R R_{i,j}$. By definition, $Z$ is a polygon whose edges pass through both \textit{s} and \textit{t}, which always contains \textit{st}. Note that it is equal to \textit{st} if and only if \textit{s} and \textit{t} lie on a ray through an element of \textit{O}.

Consider the boundary of \textit{Z}.  We begin by proving that $Z$ is a quadrilateral.  
Without loss of generality let us consider only the top half of $Z$ above the line $st$. Let the first two rays leaving $s$ and $t$ and forming part of the boundary of $Z$ be denoted $l_s$ and $l_t$, respectively. Let $l_s$ and $l_t$ intersect at a point $v$. If $v$ is not in $Z$, there must exist some line $l$ that intersects both $l_s$ and $l_t$ and either $s$ or $t$, and therefore, must be one of the original rays $l_s$ or $l_t$. This contradicts the fact that $v$ is not in $Z$. Hence we know $Z$ must be a quadrilateral. Next, we prove that no ball through $s$ and $t$ can intersect the interior of $Z$ on its boundary.

 To do this, we prove that any ball containing $s$ and $t$ must contain $Z$. Consider an arbitrary Hilbert ball $N$ with $s$ and $t$ on its boundary. Note that the boundary of $N$ must be made of segments along rays passing through elements of $O$. 
Since $s$ and $t$ are on the boundary $N$, it must be that if an edge, $W$ of $N$, intersects $Z$ it would cut through, without loss of generality, the top half of the quadrilateral. Extending $W$, we see that it must intersect $l_s$ and $l_t$ at exactly one point each. Let $W$ come from a ray passing through $O'\in O$. Consider segments through $s$ and $t$ created by the rays through $O'$. Note that $W$ must be either above or below $O's$ and $O't$, or otherwise either $s$ or $t$ would not be on the boundary. However, $Z$ is contained in the region between $O's$ and $O't$, so $W$ cannot cut through it, contradiction. Hence we can see that $N$ must contain the interior of $Z$.
\end{proof}

\begin{figure}[t]    
    \centerline{\includegraphics[scale=0.55]{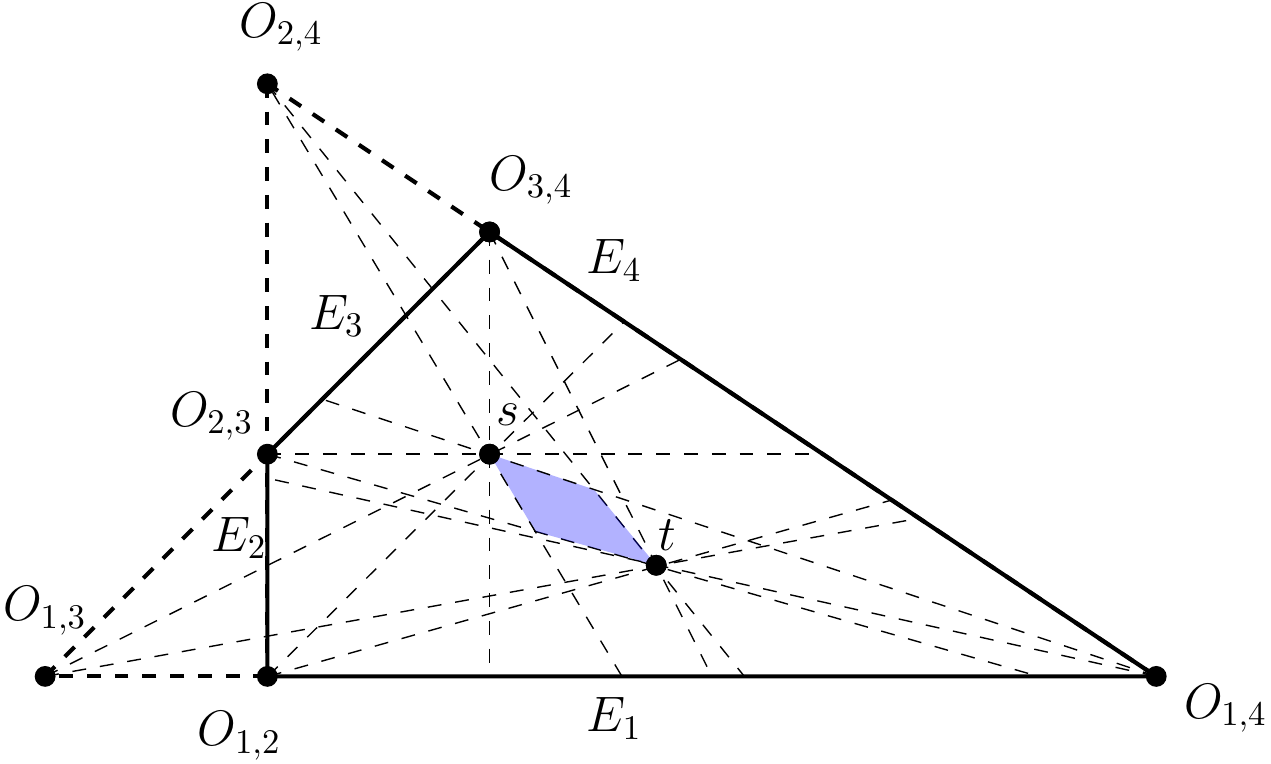}}
    \caption{Illustration of region Z.}\label{fig:t-zone}
\end{figure}
%-----------------------------------------------------------------------
%-----------------------------------------------------------------------
\section{Software}
%-----------------------------------------------------------------------
In this section we present our software for Dynamic Voronoi Diagrams in the Hilbert metric.

Our software is built on the structure provided by the software presented by Nielsen and Shao~\cite{nielsen2017balls}. We contributed by implementing a dynamic insertion algorithm. The insertion process works as follows.

Given an existing convex body $\Omega$ with a set of sites $S$, suppose the user chooses to insert a new site $t$. Our software will subdivide $\Omega$ into sectors from the perspective of $t$. Then for each site $s \in S$ our software calculates $\Vor_{\Omega,\{s,t\}}(t)$. This is done by tracing the bisector through the cell decomposition on $\Omega$ induced by the sectors of both $s$ and $t$ using the method described in \cite{gezalyan2021voronoi}. While the bisector is approximated using line segments, the equation of the bisector is printed to the terminal window along with the sector information. Our algorithm will then update the Voronoi cell of all previously calculated $s\in S$'s as $\Vor_{\Omega,S\cup \{t\}}(s)=\Vor_{\Omega,S}(s)\cap \Vor_{\Omega,\{s,t\}}(s)$, and insert $\Vor_{\Omega,S}(t)= \bigcap_{s\in S} \Vor_{\Omega,\{s,t\}}(t)$ into the Voronoi diagram. Our software runs in quadratic time in terms of the number of sites and vertices of $\Omega$ and is available at \url{https://github.com/caesardai/Voronoi_In_Hilbert/tree/main/src}.

%-----------------------------------------------------------------------
\section{Concluding Remarks}
%-----------------------------------------------------------------------
In this paper, we presented an analysis of dynamic Voronoi diagrams in the Hilbert metric and presented software for the creation of such diagrams. As discussed in the introduction, the Hilbert metric is useful in the study of quantum information theory \cite{reeb2011hilbertquantum}, real analysis \cite{lemmens2013birkhoff},  optimization \cite{chen2016entropic}, and network analysis \cite{chen2019relaxed}. Research in these fields may benefit from further extensions of algorithmic results in Euclidean to Hilbert geometry.

% %-----------------------------------------------------------------------
% \section*{Acknowledgements}
% %-----------------------------------------------------------------------
% The authors would like to acknowledge ... for ...

%-----------------------------------------------------------------------

%---------------------------- Bibliography -------------------------------

% Please add the contents of the .bbl file that you generate,  or add bibitem entries manually if you like.
% The entries should be in alphabetical order
% \small
% \bibliographystyle{abbrv}

% \begin{thebibliography}{99}

% \bibitem{so2005}
% C. So and H. So.
% \newblock A groundbreaking result.
% \newblock {\em Journal of Everything}, 59(2):23--37, 2005.

% \end{thebibliography}

%------------------------------------------------------------------------
% Restore these lines to rebuild the bibliography
% \bibliographystyle{plain}
% \bibliography{shortcuts,hilbert}
%------------------------------------------------------------------------

%------------------------------------------------------------------------
% Copied and pasted from ...bbl
%------------------------------------------------------------------------

%------------------------------------------------------------------------

\newpage

\onecolumn

\end{document}